\newtheorem{lemma}{Lemma}
\newtheorem{example}{Example}
\newtheorem{remark}{Remark}
\begin{document}

\title{On improving security of GPT cryptosystems}

\author{\authorblockN{Ernst M. Gabidulin}
\authorblockA{Department of Radio Engineering\\
Moscow Institute \\
of Physics and Technology \\
(State University)\\
141700 Dolgoprudny,  Russia\\
Email: gab@mail.mipt.ru } 
\and
\authorblockN{Haitham Rashwan }
\authorblockA{Department of Communications \\
InfoLab21, South Drive\\
Lancaster University\\
Lancaster UK LA1 4WA\\
 Email: h.rashwan@lancaster.ac.uk}
\and
\authorblockN{Bahram Honary}
\authorblockA{Department of Communications \\
InfoLab21, South Drive\\
Lancaster University\\
Lancaster UK LA1 4WA\\
Email: b.honary@lancaster.ac.uk }
}
\maketitle

\begin{abstract}
The public key cryptosystem based on rank error correcting codes (the GPT cryptosystem) was proposed in
1991. Use of rank codes in cryptographic applications is advantageous since it is practically
impossible to utilize combinatoric decoding. This enabled using public keys of a smaller size. Several
attacks against this system were published, including Gibson's attacks and more recently Overbeck's attacks. A
few modifications were proposed withstanding Gibson's attack but at least one of them was broken by
the stronger attacks by Overbeck. A tool to prevent Overbeck's attack is presented in  \cite{Gab2008}.  In
this paper, we apply this approach to other variants of the GPT cryptosystem.
\end{abstract}

\section{Introduction}
The first code-based public-key cryptosystem is introduced and investigated in \cite{McEliece}. The
system is based on Goppa codes in the Hamming metric. It is a strong cryptosystem but the
size of a public key is too large  for practical implementations to be efficient.

The public key cryptosystem based on \emph{rank} error correcting codes was proposed in \cite{GPT1991,
Gabidulin:1995/2} and is now called the GPT cryptosystem.

Rank codes are well structured. It makes easier creation of attacks. Subsequently in a series of works,
Gibson \cite{Gibson:1995, Gibson:1996} developed attacks that break the GPT system for public keys of
about $5$ Kbits which are efficient for practical values of parameters $n\leq 30$, where  $n$ is length
of rank codes with the field $\mathbb{F}_{2^{n}}$ as an alphabet.

Several variants  of the GPT PKC were introduced to withstand Gibson's attacks \cite{GabOuriv2000,
ColumnScrambler2003}. One proposal is use of a rectangular row scramble matrix instead of a square
matrix. This allows to work with subcodes of  rank codes having much more complicated structure.
Another proposal exploits a modification of Maximum Rank Distance (MRD) codes where the concept of a
\emph{column} scramble matrix was also introduced. A new class of rank codes, so called, reducible codes,
are also implemented  to modify the GPT cryptosystem \cite{ReducibleRankCodes2002,
HighWeightErrors2005}. All these variants withstand Gibson's attack.

Recently, R. Overbeck \cite{Overbeck:2005}, \cite{Overbeck:2008} proposed a new attack which is more effective than any of
Gibson's attacks. His method is based on the fact that a column scrambler is defined over the \emph{base field}. A
generalization and development of one Gibson's idea allows him to break many instances of the GPT
cryptosystem. It was found in \cite{Gab2008} that a cryptographer can define a proper column scrambler
over the \emph{extension field} \emph{without violation} of the standard mode of the PKC. It turns out that
Overbeck's attack fails in this case.

In this paper, we implement an idea of a proper choice of column scramblers over the extension field to
other variants of the GPT cryptosystem. This choice withstands Overbeck's attacks as well as Gibson's
attacks.

\section{The GPT cryptosystem}
\subsection{Rank codes}
Let $\mathbb{F}_q $ be a finite field of $q$ elements and let  $\mathbb{F}_{q^{N}} $ be an extension
field of degree $N$.

Let $\mathbf{x}=( x_{1},x_{2},\ldots ,x_{n}) $ be a vector with coordinates in $\mathbb{F}_{q^{N}} $.

The \textit{Rank} norm of $\mathbf{x%
}$ is denoted  $\mathrm{Rk}(\mathbf{x}\mid \mathbb{F}_q)$ and is defined as the \textit{maximal} number
of $x_{i}$, which are linearly independent over the \emph{base field} $\mathbb{F}_q $.

Similarly, for a matrix $\mathbf{M}$ with entries in $\mathbb{F}_{q^{N}}$ the \emph{column rank }is
defined as the \textit{maximal} number of columns, which are linearly independent over the base field
$\mathbb{F}_q $, and is denoted $\mathrm{Rk}(\mathbf{M}\mid\mathbb{F}_q)$.

The \textit{Rank} distance between $\mathbf{x}$ and $\mathbf{y}$ is defined as the rank norm of the
difference $\mathbf{x}-\mathbf{y}$,  i.e.\;
$d(\mathbf{x},\mathbf{y})=\mathrm{Rk}(\mathbf{x}-\mathbf{y}\mid \mathbb{F}_q)$.

The theory of optimal MRD (Maximal Rank Distance) codes is given in \cite{Gab1985}. A
\textit{generator} matrix $\mathbf{G}_k$ of a MRD code is defined by

\begin{equation}\label{GeneratorMatrix}
\mathbf{G}_k=\left[
\begin{array}{llll}
g_{1} & g_{2} & \cdots & g_{n} \\
g_{1}^{[1]} & g_{2}^{[1]} & \cdots & g_{n}^{[1]} \\
g_{1}^{[2]} & g_{2}^{[2]} & \cdots & g_{n}^{[2]} \\
\vdots & \vdots & \ddots & \vdots \\
g_{1}^{[k-1]} & g_{2}^{[k-1]} & \cdots & g_{n}^{[k-1]}
\end{array}
\right] ,
\end{equation}
where $g_{1},g_{2},\ldots ,g_{n}$ are any set of elements of  the extension field $\mathbb{F}_{q^{n}} $
which are linearly independent over the base field $\mathbb{F}_ q $.

The notation $g^{[ i] }:=g^{q^{i\mathrm{\mod} n}}$ means the $i$th Frobenius power of $g$.

A code with the generator matrix \eqref{GeneratorMatrix} is referred to as a $(n,k,d)$ code, where $n$ is the
code length, $k$ is the number of information symbols, $d$ is the code distance. For MRD codes, $d=n-k+1$.

Let $\mathbf{m}=(m_1,m_2,\dots,m_k)$ be an information vector of dimension $k$. The corresponding code
vector is the $n$-vector
\[
\mathbf{g}(\mathbf{m})=\mathbf{mG}_k.
\]

If $\mathbf{y}=\mathbf{g}(\mathbf{m})+\mathbf{e}$ and $\mathrm{Rk}(\mathbf{e})=s\le t=\frac{d-1}{2}$,
then the information vector $\mathbf{m}$ can be recovered uniquely from $\mathbf{y}$ by some decoding
algorithm.

There exist \emph{fast} decoding algorithms for MRD codes \cite{Gab1985, gab92}.   A decoding procedure
requires elements of the $(n-k)\times n$ parity check matrix $\mathbf{H}$ such that
$\mathbf{G}_k\mathbf{H}^{\top}=\mathbf{0}$. For decoding, the matrix $\mathbf{H}$ should be of the form
\begin{equation}\label{CheckMatrix}
\mathbf{H}=\left[
\begin{array}{llll}
h_{1} & h_{2} & \cdots & h_{n} \\
h_{1}^{[1]} & h_{2}^{[1]} & \cdots & h_{n}^{[1]} \\
h_{1}^{[2]} & h_{2}^{[2]} & \cdots & h_{n}^{[2]} \\
\vdots & \vdots & \ddots & \vdots \\
h_{1}^{[d-2]} & h_{2}^{[d-2]} & \cdots & h_{n}^{[d-2]}
\end{array}
\right] ,
\end{equation}
where elements $h_{1},h_{2},\ldots ,h_{n}$ are in the extension field $\mathbb{F}_{q^{n}} $ and are
linearly independent over the base field $\mathbb{F}_ q $.

\subsection{Description of the standard GPT cryptosystem}
The GPT cryptosystem is described as follows.
\subsubsection{Possible generator matrices
using as public keys} Denote by $\mathbf{G}_{\mathrm{pub}}$ the public key, which is a generator matrix
of a code.
\begin{enumerate}
  \item \begin{equation}\label{eq:Pub1}
  \mathbf{G}_{\mathrm{pub}}=\mathbf{S}\mathbf{G}_k\mathbf{P}.\end{equation}
  The main matrix $\mathbf{G}_k$ is given by Eq. \eqref{GeneratorMatrix}. It is used to correct rank
errors. Errors of rank not greater than $t=\left\lfloor\frac{n-k}{2}\right\rfloor$ can be corrected.

A square $k\times k$ matrix $\mathbf{S}$ over the extension field $\mathbb{F}_{q^{n}}$ is called the
row scrambling matrix. It is used to destroy any visible structure of the matrix $\mathbf{G}_k$ by
mixing its rows.

A matrix $\mathbf{P}=\begin{bmatrix}p_{ij}\end{bmatrix} $ is called the column scrambler. This matrix
is a non singular square matrix of order $n$. It is used to mix  columns of $\mathbf{G}_k$.

If $\mathbf{P}$ is a matrix over the \emph{base field} $\mathbb{F}_{q}$, then a matrix $\mathbf{G}_k\mathbf{P}$ has just the same structure as the matrix $\mathbf{G}_k$ with a different first row. Hence, from the point of view of breaking, matrices $\mathbf{G}_{\mathrm{pub}}=\mathbf{S}\mathbf{G}_k\mathbf{P}$ and $\mathbf{G}_{\mathrm{pub}}=\mathbf{S}\mathbf{G}_k$ are equivalent. A cryptographer may not use a matrix $\mathbf{P}$ at all.

On the other hand, if entries $p_{ij}$ are in the extension field $\mathbb{F}_{q^{n}}$, then a matrix $\mathbf{P}$ makes breaking much harder. We shall analyze this case.

  \item Another generator matrix is obtained by an extension of matrix $\mathbf{G}_k$:
  \begin{equation}\label{eq:Pub2}
  \mathbf{G}_{\mathrm{pub}}=\mathbf{S}\begin{bmatrix}\mathbf{X} & \mathbf{G}_k                               \end{bmatrix}
\mathbf{P}.
\end{equation}
A matrix $\mathbf{X}$ of size $k\times t_1$ is called a distortion source
matrix. This matrix is a part of the concatenation $\begin{bmatrix}\mathbf{X} & \mathbf{G}_k
\end{bmatrix}$. The column rank of $\mathbf{X}$ is $\mathrm{Rk}(\mathbf{X}\mid \mathbb{F}_q)=t_1$. The
number $t_1$ is a \textit{design} parameter. Another  \textit{design} parameter is the ordinary rank
which can take values from $1$ to $t_1$. The rank distance of a code generated by the matrix $\mathbf{G}_{\mathrm{pub}}$ is not less than the rank distance of a code generated by the matrix $\mathbf{S}\begin{bmatrix}\mathbf{O} & \mathbf{G}_k \end{bmatrix}\mathbf{P}$.

A matrix $\mathbf{P}$ is called the column scrambler. This matrix is a non-singular square matrix of
order $n+t_1$. It is used to mix and to corrupt columns of $\mathbf{G}_k$ by means  of the distortion
source matrix $\mathbf{X}$.

Note that in previous works, the matrix $\mathbf{P}$  \emph{has all its entries in the base field}
$\mathbb{F}_q$. Overbeck's attack against this PKC succeeded due to this fact. But the attack fails for
the proper choice of $\mathbf{P}$ over the extension field $\mathbb{F}_{q^{n}} $ \cite{Gab2008}.

  \item \begin{equation}\label{eq:Pub3}
  \mathbf{G}_{\mathrm{pub}}=\mathbf{S}\begin{bmatrix}\mathbf{X} & \mathbf{G}_k                               \end{bmatrix}
\mathbf{P}.
\end{equation}
Here a scrambling matrix $\mathbf{S}$ is a rectangular $(k-p)\times k$
matrix.
  \item \begin{equation}\label{eq:Pub4}
  \mathbf{G}_{\mathrm{pub}}=\mathbf S\Bigl([\mathbf O\ \mathbf
G_k]+[\mathbf X_1\ \mathbf X_2]\Bigr)\mathbf{P}.
\end{equation}
Here: the row scrambler $\mathbf{S}$ is a
square non-singular matrix of order $k$ with entries in $\mathbf \mathbb{F}_{q^n}$ chosen at random; $\mathbf O$
is the $k\times m$ matrix of $0$'s; $\mathbf X_1$ is some $k\times m$ matrix  --- the first distortion
matrix; $\mathbf X_2$ is a $k\times n$ matrix with $r(\mathbf X_2|\mathbf F_1)=t_1$ --- the second
distortion matrix; the column scrambler $\mathbf P$ is a non-singular matrix of order $n+m$ with
entries in $\mathbb F_{q}$.
\end{enumerate}
\subsubsection{Plaintext} For public keys \eqref{eq:Pub1}, \eqref{eq:Pub2} and \eqref{eq:Pub4}, a \textbf{plaintext }is any $k$-vector $\mathbf{m}=$ $( m_{1},m_{2},\ldots
,m_{k})$, $m_{s}\in \mathbb{F}_{q^{n}} ,\ s=1,2,\ldots ,k$. For the public key \eqref{eq:Pub3}, a
plaintext is a $(k-p)$-vector.

\subsubsection{Private keys} The \textbf{Private keys} are matrices $\mathbf{S,G_k,X,,X_1,X_2,P}$ separately
and (explicitly) a fast decoding algorithm of an MRD code. Note also, that the matrices
$\mathbf{X},\mathbf{X}_1,\mathbf{X}_2$ are not used to decrypt a ciphertext and can be deleted after
calculating the Public key.

\subsubsection{Encryption} Let $\mathbf{m}=( m_{1},m_{2},\ldots ,m_{k}) $ be a plaintext. The
corresponding ciphertext is given by
\begin{equation}\label{Ciphertext}
\mathbf{c}=\mathbf{mG}_{pub}+\mathbf{e}=\mathbf{mS}[\mathbf{X}|\mathbf{G}_k]\mathbf{P}+ \mathbf{e},
\end{equation}
where $\mathbf{e}$ is an artificial vector of errors of  rank $t_{2}$ or less, randomly chosen and
added by the sending party. The number $t_2$ is the third design parameter.

\subsubsection{Decryption} The legitimate receiver upon receiving $\mathbf{c}$  calculates
\[
\mathbf{c}^{\prime }=\mathbf{cP}^{-1}=\mathbf{mS}[\mathbf{X}|\mathbf{G}_k]+\mathbf{eP}^{-1}.
\]
Then  he extracts from $\mathbf{c^{\prime }}$ the plaintext $\mathbf{m}$ using decoding algorithms and
properties of public keys.

\section{The Overbeck attack - an idea}
In \cite{Overbeck:2005, Overbeck:2008}, a new attack is proposed on the GPT PKC described by means  of Eq.
\eqref{eq:Pub2}.

It is claimed, that  similar attacks can be proposed on all the variants of GPT PKC.

We can not describe the attack in detail but recall briefly an idea of this attack.

We need some notations.

For $x\in \mathbb{F}_{q^n}$, let $\boldsymbol{\sigma}:\ \mathbb{F}_{q^n}\rightarrow \mathbb{F}_{q^n},\ \sigma(x)=x^q$ be the Frobenius automorphism.

For the matrix $\mathbf{T}=(t_{ij})$ over $\mathbb{F}_{q^{n}}$, let
$\sigma(\mathbf{T})=(\sigma(t_{ij}))=(t_{ij}^q)$.

For any  integer $s$, let $\sigma^s(\mathbf{T})=\sigma(\sigma^{s-1}(\mathbf{T}))$.

It is clear that $\sigma^n=\sigma$. Thus the inverse exists $\sigma^{-1}=\sigma^{n-1}$.

The following simple properties of $\boldsymbol{\sigma}$ are useful:
\begin{itemize}
  \item $\sigma(a+b)=\sigma(a)+\sigma(b)$.
  \item $\sigma(ab)=\sigma(a)\sigma(b)$.
  \item In general, for matrices $\sigma(\mathbf{T})\neq \mathbf{T}$.
  \item If $\mathbf{P}$ is a matrix over the \emph{base field}  $\mathbb{F}_q$, then
  $\sigma(\mathbf{P})=\mathbf{P}$.
  \end{itemize}

\subsubsection{An idea of Overbeck's attack} To break a system, a cryptanalyst constructs from the
public key $ \mathbf{G}_{\mathrm{pub}}=\mathbf{S}\begin{bmatrix}\mathbf{X} & \mathbf{G}_k
\end{bmatrix} \mathbf{P}$ the \emph{extended } public key as follows:
  \begin{equation}\label{ExtendedKey1}
    \mathbf{G}_{ext,pub} =
    \left \| \begin{array}{r}
        \mathbf{G}_{pub}\; \\
        \sigma(\mathbf{G}_{pub}) \\
        \sigma^2(\mathbf{G}_{pub}) \\
        \ldots \\
        \sigma^u(\mathbf{G}_{pub})
    \end{array} \right \| .
\end{equation}

The property that $\sigma(\mathbf{P})=\mathbf{P}$, if $\mathbf{P}$ is a matrix over the \emph{base
field} $\mathbb{F}_q$,  is used in \eqref{ExtendedKey1}. Further transformations of Eq.
\eqref{ExtendedKey1} allows to obtain the first row of the check matrix $\mathbf{H}$ of the rank code
used. It is enough to break the cryptosystem.

If $\mathbf{P}$  is a matrix over the \emph{extension field} $\mathbb{F}_{q^n}$, then
$\sigma(\mathbf{P})\neq\mathbf{P}$.

\textbf{We have to stress that  Overbeck's attack fails in this case.}

Moreover Gibson's attacks use also in implicit form the condition $\sigma(\mathbf{P})=\mathbf{P}$ and
can not be implemented without it.

Our intention is to show that there exist column scramblers $\mathbf{P}$ in  the \emph{extension field}
$\mathbb{F}_{q^n}$ such that the GPT PKC works and is secure against all known attacks.
\section{Other attacks on the GPT PKC}
An important part of a decryption procedure is correcting rank errors using a fast decoding algorithm known to the legitimate party. An unauthorized party may want to correct rank errors by a general algorithm without any knowledge of the structure of a rank code. We consider algorithms described in \cite{OurivskiJohansson2002} and  in the recent paper \cite{Perret2008}.

The authors of  \cite{OurivskiJohansson2002}  proposed two  algorithms for decoding an arbitrary $(n, k)$ linear rank distance code over $\mathbf{F}_{q^N}$. These algorithms correct errors of rank $t=\left\lfloor\frac{n-k}{2}\right\rfloor$ in $O\left((Nt)^3q^{(t-1)(k+1)}\right)$
and
$O\left((k + t)^3t^3q^{(t-1)(N-t)}\right)$ operations in $\mathbb{F}_q$ respectively.

Consider as an example a case when we use a $(28,14)$ rank code with $N=n=28,k=14, q=2, d=15, t=7$. The size of the public key is equal to $Nnk=10976$ bits. To correct $7$-fold rank errors, Ourivski--Johansson's algorithms  \cite{OurivskiJohansson2002} require  $2^{113}$ and $2^{147}$ operations in $\mathbb{F}_2$. Thus these attacks are infeasible for practical implementations.

The algorithm of \cite{Perret2008} requires $O\left(\log(q)N^{3(N-t)}\right)$ operations. We have for the above example  $2^{302}$ operations. Thus this attack is also infeasible for practical implementations.

\section{The simple GPT PKC}
Consider the public key of Eq. \eqref{eq:Pub1}. No distortion matrix $\mathbf{X}$ is used. A ciphertext
has the form
\begin{equation}\label{SimplePKC}
\mathbf{c}=\mathbf{mS}\mathbf{G}_k\mathbf{P}+ \mathbf{e},
\end{equation}
where the rank $\mathrm{Rk}(\mathbf{e}\mid \mathbb{F}_q)=t_1$ of an artificial error $\mathbf{e}$ is less or equal to $t=\lfloor\frac{n-k}{2}\rfloor$.

Brute-force attacks are based on the exhaustive search of possible artificial errors $\mathbf{e}$. It depends on the number of error vectors. If artificial errors are all possible $n$-vectors of rank $t_1$, then the number of operations to search is $O\left(q^{nt_1}\right)$.

Attacks on the public key contemplate to find unknown factors (to a cryptanalyst) $\mathbf{S}$, $\mathbf{G}_k$ and $\mathbf{P}$, or, to find  matrices $\mathbf{\widetilde{S}}$, $\mathbf{\widetilde{G}}_k$ and $\mathbf{\widetilde{P}}$ such that $\mathbf{S}\mathbf{G}_k\mathbf{P}=\mathbf{\widetilde{S}}\mathbf{\widetilde{G}}_k\mathbf{\widetilde{P}}$  from the known public key matrix $\mathbf{S}\mathbf{G}_k\mathbf{P}$  .

Assume first that the column scrambler $\mathbf{P}$ is a matrix over the base field $\mathbb{F}_{q}$. The legitimate user knows the secret key $\mathbf{P}$ and $\mathbf{P}^{-1}$. His algorithm is as follows.
 \begin{enumerate}
   \item Get a ciphertext $\mathbf{c}=\mathbf{mS}\mathbf{G}_k\mathbf{P}+ \mathbf{e}$.
   \item Multiply to the right by $\mathbf{P}^{-1}$. Get an intermediate ciphertext
       \begin{equation}\label{eq:InterC}
       \mathbf{c}^{\prime}=\mathbf{c}\mathbf{P}^{-1}=\mathbf{mS}\mathbf{G}_k+ \mathbf{e}\mathbf{P}^{-1}.
       \end{equation}
        Note that $\mathrm{Rk}(\mathbf{e}\mathbf{P}^{-1}\mid \mathbb{F}_q)=\mathrm{Rk}(\mathbf{e}\mid \mathbb{F}_q)=t_1\le t=\lfloor\frac{n-k}{2}\rfloor$ since $\mathbf{P}^{-1}$ is in the \emph{base field } $\mathbb{F}_q$.
   \item Decode $\mathbf{c}^{\prime}$ using a fast decoding algorithm and get $\mathbf{mS}$.
   \item Get a plaintext $\mathbf{m}$ as $(\mathbf{mS})\mathbf{S}^{-1}$.
 \end{enumerate}

On the other hand, the cryptanalyst can get a successful
representation $\mathbf{G}_{\mathrm{pub}}=\mathbf{\widetilde{S}}\widetilde{\mathbf{G}}_k$ for the equivalent rank
code with the generator matrix $\widetilde{\mathbf{G}}_k$ from the public key $\mathbf{S}\mathbf{G}_k\mathbf{P}$. It can be  done by means of Gibson--Overbeck's attacks and therefore break the system.

The situation is quite different if $\mathbf{P}$ is a matrix over the extension field
$\mathbb{F}_{q^N}$. For the general matrix $\mathbf{P}$, it is unknown how to solve the following problems: to find the public key factors $\mathbf{S}$, $\mathbf{G}_k$ and $\mathbf{P}$, or, to find  matrices $\mathbf{\widetilde{S}}$, $\mathbf{\widetilde{G}}_k$ and $\mathbf{\widetilde{P}}$ such that $\mathbf{S}\mathbf{G}_k\mathbf{P}=\mathbf{\widetilde{S}}\mathbf{\widetilde{G}}_k\mathbf{\widetilde{P}}$ from the known public key matrix $\mathbf{S}\mathbf{G}_k\mathbf{P}$. Gibson--Overbeck's attacks are not applicable if a matrix $\mathbf{P}$ is chosen in the extension field $\mathbb{F}_{q^N}$.

We can assume  from now on that Gibson's and Overbeck's attacks can not be
implemented. But the cryptographer should  select a \emph{secret} column scrambler $\mathbf{P}$ in the \emph{extension field} $\mathbb{F}_{q^N}$ and a \emph{public} set $\mathcal{E}$ of artificial errors $\mathbf{e}$ such that
\begin{equation}\label{eq:ArtError1}
\mathrm{Rk}(\mathbf{e}\mathbf{P}^{-1}\mid \mathbb{F}_{q})\le
t=\left\lfloor\frac{n-k}{2}\right\rfloor,
\end{equation}
where $\mathbf{e}\mathbf{P}^{-1}$ is an error in the intermediate ciphertext \eqref{eq:InterC}.
\paragraph{ Choice of $\mathcal{E}$} The public set of artificial errors is chosen as  the set consisting of all $n$-vectors in $\mathbb{F}_{q^N}^n$ with rank $t_1<t$:
\[
\mathcal{E}=\left\{\mathbf{e}:\mathbf{e}\in \mathbb{F}_{q^N}^n,\mathrm{Rk}(\mathbf{e}\mid \mathbb{F}_q)=t_1\right\}.
\]
\paragraph{ Choice of $\mathbf{P}$} The cryptographer chooses an inverse matrix $\mathbf{P}^{-1}$ in the form $\mathbf{P}^{-1}=\begin{bmatrix} \mathbf{Q}_1 & \mathbf{Q}_2 \end{bmatrix}$, where $\mathbf{Q}_1$ is a submatrix of size $n\times (t-t_1)$ with entries in the \emph{extension field} $\mathbb{F}_{q^N}$ while $\mathbf{Q}_2$ is a submatrix of size $n\times (n-t+t_1)$ with entries in the \emph{base field} $\mathbb{F}_{q}$.
\begin{lemma}
Let $\mathbf{e}$ be any $n$-vector of rank $t_1$. Then the condition Eq. \eqref{eq:ArtError1} is hold.
\end{lemma}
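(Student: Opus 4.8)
The plan is to exploit the block structure $\mathbf{P}^{-1}=\begin{bmatrix}\mathbf{Q}_1 & \mathbf{Q}_2\end{bmatrix}$ together with the subadditivity of the rank norm under concatenation of vectors. Writing the intermediate error as $\mathbf{e}\mathbf{P}^{-1}=\begin{bmatrix}\mathbf{e}\mathbf{Q}_1 & \mathbf{e}\mathbf{Q}_2\end{bmatrix}$, I would bound the rank of each block separately and then add the two bounds.

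First I would handle the block $\mathbf{e}\mathbf{Q}_1$. Since $\mathbf{Q}_1$ has only $t-t_1$ columns, the vector $\mathbf{e}\mathbf{Q}_1$ has exactly $t-t_1$ coordinates, so its rank over $\mathbb{F}_q$ cannot exceed the number of its coordinates. This gives the trivial but sufficient bound $\mathrm{Rk}(\mathbf{e}\mathbf{Q}_1\mid\mathbb{F}_q)\le t-t_1$, regardless of the entries of $\mathbf{Q}_1$ lying in the extension field. Next I would treat the block $\mathbf{e}\mathbf{Q}_2$, which is where the hypothesis that $\mathbf{Q}_2$ is defined over the base field becomes essential. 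Let $V\subseteq\mathbb{F}_{q^N}$ be the $\mathbb{F}_q$-linear span of the coordinates of $\mathbf{e}$; by hypothesis $\dim_{\mathbb{F}_q}V=t_1$. Each coordinate of $\mathbf{e}\mathbf{Q}_2$ is an $\mathbb{F}_q$-linear combination of the coordinates of $\mathbf{e}$, the coefficients being the entries of $\mathbf{Q}_2$, which lie in $\mathbb{F}_q$; hence every coordinate of $\mathbf{e}\mathbf{Q}_2$ again lies in $V$. Consequently $\mathrm{Rk}(\mathbf{e}\mathbf{Q}_2\mid\mathbb{F}_q)\le\dim_{\mathbb{F}_q}V=t_1$.

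Finally I would combine the two estimates. The $\mathbb{F}_q$-span of all coordinates of $\mathbf{e}\mathbf{P}^{-1}$ is contained in the sum of the spans of the coordinates of the two blocks, so the rank is subadditive and
\[
\mathrm{Rk}(\mathbf{e}\mathbf{P}^{-1}\mid\mathbb{F}_q)\le\mathrm{Rk}(\mathbf{e}\mathbf{Q}_1\mid\mathbb{F}_q)+\mathrm{Rk}(\mathbf{e}\mathbf{Q}_2\mid\mathbb{F}_q)\le(t-t_1)+t_1=t,
\]
which is exactly the required inequality \eqref{eq:ArtError1}.

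The argument is almost entirely structural, and I do not expect a serious obstacle: the only point that needs care — and the very reason the construction works — is the base-field hypothesis on $\mathbf{Q}_2$, which confines the $\mathbf{Q}_2$-block to the $t_1$-dimensional space $V$ instead of letting it spread across a larger $\mathbb{F}_q$-subspace of $\mathbb{F}_{q^N}$, as a general extension-field matrix would. The extension-field columns are deliberately restricted to the $t-t_1$ coordinates of $\mathbf{Q}_1$, whose contribution is absorbed by the crude per-coordinate bound. One should also note that the inequality is the worst case and that the rank may be strictly smaller; only the upper bound $t$ is needed to guarantee that the legitimate decoder can correct $\mathbf{e}\mathbf{P}^{-1}$.
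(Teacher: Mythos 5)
Your proof is correct and takes essentially the same approach as the paper's: the same block decomposition $\mathbf{e}\mathbf{P}^{-1}=\begin{bmatrix}\mathbf{e}\mathbf{Q}_1 & \mathbf{e}\mathbf{Q}_2\end{bmatrix}$, the trivial width bound $t-t_1$ on the first block, the base-field argument giving the bound $t_1$ on the second block, and subadditivity of the rank under concatenation. The paper merely phrases your span-of-coordinates argument through the explicit factorization $\mathbf{e}=\begin{bmatrix}\mathbf{w}_1 & \dots & \mathbf{w}_{t_1}\end{bmatrix}\mathbf{A}$ with $\mathbf{A}$ a full-rank $t_1\times n$ matrix over $\mathbb{F}_q$, which is the same idea in different notation.
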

\begin{proof}
We have $\mathbf{e}\mathbf{P}^{-1}=\mathbf{e}\begin{bmatrix} \mathbf{Q}_1 & \mathbf{Q}_2 \end{bmatrix}=\begin{bmatrix} \mathbf{e}\mathbf{Q}_1 & \mathbf{e}\mathbf{Q}_2 \end{bmatrix}$. A vector $\mathbf{e}$ can be represented as $\mathbf{e}=\begin{bmatrix} \mathbf{w}_1 & \mathbf{w}_2 & \dots & \mathbf{w}_{t_1} \end{bmatrix}\mathbf{A}$, where $\mathbf{w}_{j}$'s are linearly independent over $\mathbb{F}_q$ and $\mathbf{A}$ is the $t_1\times n$ matrix over $\mathbb{F}_q$ of rank $t_1$. Then $\mathbf{e}\mathbf{Q}_1 =\begin{bmatrix} \mathbf{w}_1 & \mathbf{w}_2 & \dots & \mathbf{w}_{t_1} \end{bmatrix}\mathbf{B}_1$, where $\mathbf{B}_1=\mathbf{AQ}_1$ is the $t_1\times (t-t_1)$ matrix over the extension field $\mathbb{F}_{q^N}$. It is clear that $\mathrm{Rk}(\mathbf{e}\mathbf{Q}_1\mid \mathbb{F}_q)\le t-t_1$. Similarly, $\mathbf{e}\mathbf{Q}_2 =\begin{bmatrix} \mathbf{w}_1 & \mathbf{w}_2 & \dots & \mathbf{w}_{t_1} \end{bmatrix}\mathbf{B}_2$, where $\mathbf{B}_2=\mathbf{AQ}_2$ is the $t_1\times (n-t+t_1)$ matrix over the \emph{base} field $\mathbb{F}_{q}$. It follows that $\mathrm{Rk}(\mathbf{e}\mathbf{Q}_2\mid \mathbb{F}_q)=\min(t_1,n-t+t_1)\le t_1$. Hence
\[
\begin{array}{l}
\mathrm{Rk}(\mathbf{e}\mathbf{P}^{-1}\mid \mathbb{F}_{q})\le \mathrm{Rk}(\mathbf{e}\mathbf{Q}_1\mid \mathbb{F}_q)+\mathrm{Rk}(\mathbf{e}\mathbf{Q}_2\mid \mathbb{F}_q)\\
\le (t-t_1)+t_1=t=\left\lfloor\frac{n-k}{2}\right\rfloor.
\end{array}
\]
\end{proof}
\begin{remark}
The matrix $\mathbf{P}^{-1}$ can be replaced by a matrix $\mathbf{\widetilde{P}}^{-1}=\mathbf{P}^{-1}\mathbf{Q}$, where $\mathbf{Q}$ is any $n\times n$ non singular matrix over the base field $\mathbb{F}_q)$.
\end{remark}
\begin{example}
Consider again the case when we use a $(28,14)$ rank code with $N=n=28,k=14, q=2, d=15, t=7$.
Possible systems are listed below.

$t_1=0$, $\mathbf{P}$ in the extension field, attacks on PK -- Information sets attacks, brute-force attacks -- not needed, status -- \emph{insecure}.

$t_1=1$, $\mathbf{P}$ in the extension field, attacks on PK -- unknown, brute-force attacks -- $2^{24}$, status -- \emph{insecure}.

$t_1=2$, $\mathbf{P}$ in the extension field, attacks on PK -- unknown, brute-force attacks -- $2^{48}$, status -- \emph{insecure}.

$t_1=3$, $\mathbf{P}$ in the extension field, attacks on PK -- unknown, brute-force attacks -- $2^{72}$, status -- secure.

$t_1=4$, $\mathbf{P}$ in the extension field, attacks on PK -- unknown, brute-force attacks -- $2^{96}$, status -- secure.

$t_1=5$, $\mathbf{P}$ in the extension field, attacks on PK -- unknown, brute-force attacks -- $2^{120}$, status -- secure.

$t_1=6$, $\mathbf{P}$ in the extension field, attacks on PK -- unknown, brute-force attacks -- $2^{144}$, status -- secure.

$t_1=7$, $\mathbf{P}$ in the \emph{base} field, attacks on PK -- Gibson--Overbeck, brute-force attacks -- $2^{168}$, status -- \emph{insecure}.

\end{example}

For $t_1=0\dots 2$, the system is insecure  due to brute-force attacks. For $t_1=7$, the system is insecure because of Gibson--Overbeck's attacks since in this case the matrix $\mathbf{P}$ is in the base field $\mathbb{F}_q$. But for $t_1=3\dots 6$, the system is secure against all known attacks. We recommend the value $t_1=3$, or the value $t_1=4$.

\section{Other variants of the GPT PKC }
We can repeat word for word all previous considerations for variants \eqref{eq:Pub2}- \eqref{eq:Pub4}
and choose for each case a proper column scrambler $\mathbf{P}$ over the \emph{extension field}
$\mathbb{F}_{q^N}$. This prevents Overbeck's and Gibson's attacks.

\section{Conclusion}
An approach is presented to withstand  attacks on the GPT Public key cryptosystem based on rank codes.

It is shown that there exist column scramblers $\mathbf{P}$ over the extension field
$\mathbb{F}_{q^{N}}$ which allow decryption for the authorized party while an unauthorized party can not
break the system by means of known attacks.

\end{document}